\documentclass[conference]{IEEEtran}
\usepackage{amsthm, amssymb, amsmath}
\usepackage{verbatim, float}
\usepackage{mathrsfs}
\usepackage{graphics}
\usepackage[usenames,dvipsnames]{pstricks}
\usepackage{epsfig}
\usepackage{pst-grad} 
\usepackage{pst-plot} 
\usepackage{cases}
\usepackage{algorithmic}
\usepackage{bm}

\usepackage[top=1.4cm, bottom=1.4cm, left=1.33cm, right=1.33cm]{geometry}
\usepackage{url}

\usepackage{hhline}

\usepackage{etoolbox}
\makeatletter
\patchcmd{\@makecaption}
  {\scshape}
  {}
  {}
  {}
\makeatother

\usepackage[singlelinecheck=false]{caption}
\usepackage{diagbox}

\theoremstyle{plain}
\newtheorem{theorem}{Theorem}
\newtheorem{lemma}{Lemma}

\newtheorem{corollary}{Corollary}

\theoremstyle{definition}
\newtheorem{definition}{Definition}
\newtheorem{example}{Example}
\newtheorem{remark}{Remark}


\newcommand{\C}{{\mathcal C}}


\newcommand{\ba}{{\boldsymbol a}}
\newcommand{\bb}{{\boldsymbol b}}



\newcommand{\ff}{\mathbb{F}}









\newcommand{\tr}{\mathsf{Tr}}


\newcommand{\define}{\stackrel{\mbox{\tiny $\triangle$}}{=}}

\newcommand{\Cd}{\mathcal{C}^\perp}
\newcommand{\rsk}{\text{RS}(A,k)}
\newcommand{\grskl}{\text{GRS}(A,k,\boldsymbol{\lambda})}
\newcommand{\grsnkl}{\text{GRS}(A,n-k,\boldsymbol{\lambda})}

\newcommand{\fa}{f(\alpha)}
\newcommand{\fas}{f(\alpha^*)}
\newcommand{\fab}{f(\overline{\alpha})}

\newcommand{\pa}{p(\alpha)}
\newcommand{\pas}{p(\alpha^*)}

\newcommand{\pia}{p_i(\alpha)}
\newcommand{\pias}{p_i(\alpha^*)}
\newcommand{\piab}{p_i(\overline{\alpha})}

\newcommand{\qia}{q_i(\alpha)}
\newcommand{\qias}{q_i(\alpha^*)}
\newcommand{\qiab}{q_i(\overline{\alpha})}

\newcommand{\poas}{p_1(\alpha^*)}

\newcommand{\qoab}{q_1(\overline{\alpha})}

\newcommand{\ptmoas}{p_{t-1}(\alpha^*)}

\newcommand{\pta}{p_t(\alpha)}
\newcommand{\ptas}{p_t(\alpha^*)}
\newcommand{\ptab}{p_t(\overline{\alpha})}

\newcommand{\qta}{q_t(\alpha)}
\newcommand{\qtas}{q_t(\alpha^*)}
\newcommand{\qtab}{q_t(\overline{\alpha})}

\newcommand{\qsbz}{Q_{\alpha^*,\overline{\alpha}}(z)}

\newcommand{\Qab}{Q_{\alpha, \beta}}
\newcommand{\Qabz}{Q_{\alpha, \beta}(z)}
\newcommand{\Qba}{Q_{\beta,\alpha}}
\newcommand{\Qbaz}{Q_{\beta,\alpha}(z)}

\renewcommand{\psi}{p^*_i}
\newcommand{\psix}{p^*_i(x)}
\newcommand{\psia}{p^*_i(\alpha)}
\newcommand{\psias}{p^*_i(\alpha^*)}

\newcommand{\psta}{p^*_t(\alpha)}
\newcommand{\pstas}{p^*_t(\alpha^*)}
\newcommand{\pstab}{p^*_t(\overline{\alpha})}
\newcommand{\psoas}{p^*_1(\alpha^*)}

\newcommand{\psoab}{p^*_1(\overline{\alpha})}

\newcommand{\pstmoab}{p^*_{t-1}(\overline{\alpha})}

\newcommand{\Kab}{K_{\alpha,\beta}}
\newcommand{\Kba}{K_{\beta,\alpha}}

\newcommand{\Ksb}{K_{\alpha^*,\overline{\alpha}}}

\begin{document}

\title{Repairing Reed-Solomon Codes\\ With Two Erasures
\thanks{
H. Dau and O. Milenkovic are with the Coordinated Science Laboratory, University of Illinois at Urbana-Champaign, 1308 W. Main Street, Urbana, IL 61801, USA. Emails: \{hoangdau, milenkov\}@illinois.edu.
I. Duursma is with the Departments of Mathematics, and also with the Coordinated Science Laboratory, University of Illinois at Urbana-Champaign, 1409 W. Green St, Urbana, IL 61801, USA. Email: duursma@illinois.edu.
H. M. Kiah is with the Division of Mathematical Sciences, School of Physical and Mathematical Sciences, Nanyang Technological University, 21 Nanyang Link, Singapore 637371. Email: hmkiah@ntu.edu.sg.} 
}
\author{Hoang Dau, Iwan Duursma, Han Mao Kiah, and Olgica Milenkovic}
\author{
  \IEEEauthorblockN{
    Hoang~Dau\IEEEauthorrefmark{1},
		Iwan Duursma\IEEEauthorrefmark{2},
		Han Mao Kiah\IEEEauthorrefmark{3},
    and Olgica~Milenkovic\IEEEauthorrefmark{4}
    }
  {\normalsize
    \begin{tabular}{ccc}
      \IEEEauthorrefmark{1}\IEEEauthorrefmark{2}\IEEEauthorrefmark{4}University of Illinois at Urbana-Champaign, 
			\IEEEauthorrefmark{3}Nanyang Technological University \\
			Emails: \{\IEEEauthorrefmark{1}hoangdau, \IEEEauthorrefmark{2}duursma, \IEEEauthorrefmark{4}milenkov\}@illinois.edu, \IEEEauthorrefmark{3}hmkiah@ntu.edu.sg
    \end{tabular}}\vspace{-3ex}
    }
\date{}
\maketitle

\begin{abstract}
Despite their exceptional error-correcting properties, Reed-Solomon (RS) codes have been overlooked in distributed storage applications 
due to the common belief that they have poor repair bandwidth: A naive repair approach would require the whole file to be reconstructed in 
order to recover a single erased codeword symbol. In a recent work, Guruswami and Wootters (STOC'16) proposed a single-erasure 
repair method for RS codes that achieves the optimal repair bandwidth amongst all linear encoding schemes. 
We extend their trace collection technique to cope with two erasures.     
\end{abstract}

\section{Introduction}
\label{sec:intro}

\subsection{Background}

The \emph{repair bandwidth} is an important performance metric of erasure codes in the context 
of distributed storage~\cite{Dimakis_etal2010}.
In such a system, for a chosen field $F$, a data vector in $F^k$ is mapped to a codeword vector in $F^n$, whose entries are 
stored at different storage nodes. When a node fails, the symbol stored at that node is erased (lost). 
A replacement node (RN) has to recover the content
stored at the failed node by downloading information from the other nodes. 
The repair bandwidth is the total amount of information that the RN
has to download in order to successfully complete the repair process. 

Reed-Solomon (RS) codes~\cite{ReedSolomon1960}, which have been extensively studied in theory~\cite{MW_S} and widely used in practice, were believed to have prohibitively high repair bandwidth. In a naive repair scheme, recovering the content stored at a \emph{single} failed node would require downloading the \emph{whole} file, i.e., $k$ symbols over $F$. The poor performance in repairing failed nodes of RS codes motivated the introduction of repair-efficient codes such as regenerating codes~\cite{Dimakis_etal2010} and locally repairable codes~\cite{OggierDatta2011,GopalanHuangSimitciYekhanin2012,PapailiopoulosDimakis2012}.

\begin{figure}[t]
\centering
\includegraphics[scale=0.8]{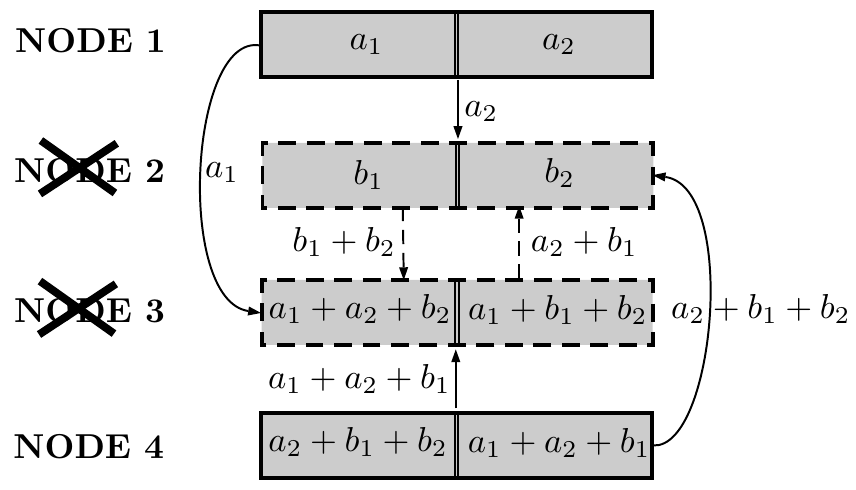}
\caption{A toy example illustrating the repair procedure for \emph{two} failed nodes in a four-node storage system
based on a $[4,2]$ Reed-Solomon code over $\ff_4$. 
The stored file is $\big((a_1,a_2),(b_1,b_2)\big) \in \ff_4^2$, where $a_1$, $a_2$, $b_1$, and $b_2$ are bits in $\ff_2$. 
Suppose that Node~2 and Node~3 fail
simultaneously. In the Download Phase, each replacement node first downloads two bits (along the solid arrows) from the two available nodes, namely Node~1 and Node~4. In the Collaboration Phase, the replacement nodes communicate with each other to complete their own repair processes by exchanging two extra bits (along the dashed arrows), computed based on the previously downloaded bits. 
}
\label{fig:toy_example}
\vspace{-10pt}
\end{figure}

Guruswami and Wootters~\cite{GuruswamiWootters2016} recently proposed a bandwidth-optimal linear repair method based 
on RS codes. The key idea behind their method is to recover a single erased symbol by collecting a sufficiently large number of
its (field) traces, each of which can be constructed from a number of traces of other 
symbols. As all traces belong to a subfield $B$ of $F$ and traces from 
the same symbol are related, the total repair bandwidth can be significantly reduced.
The repair scheme obtained by Guruswami and Wootters~\cite{GuruswamiWootters2016}, however, only applies to the case
of one erasure, or in other words, one failed node. 

\subsection{Our Contribution}

We propose an extension of the Guruswami-Wootters repair scheme that can ensure recovery from two erasures. We provide two \emph{distributed} schemes for Reed-Solomon codes, both of which use
the same repair bandwidth \emph{per erasure} as in the case of a single erasure. 
In these repair schemes, the two RNs first download repair data
from all available nodes (Download Phase). They subsequently collaborate to exchange the data in order to complete the repair process at each node (Collaboration Phase).
The first scheme has a \emph{collaboration depth one}, that is, in the Collaboration Phase, the two RNs send out repair data to each other simultaneously in one round. This scheme works whenever the field extension degree $t$ is divisible by the characteristic of the field $F$. An example illustrating the first scheme is given in Fig.~\ref{fig:toy_example}. The second scheme has a \emph{collaboration depth two}, that is, in the Collaboration Phase, one RN receives the repair data from the
other RN, completes its repair process, and then sends out
its repair data to the other node. This scheme applies to all field extension degrees. 

\subsection{Organization}    

The paper is organized as follows. We first provide relevant definitions and introduce the terminology used throughout the paper. We then proceed to 
discuss the Guruswami-Wootters repair scheme for RS codes in the presence of a single erasure in Section~\ref{sec:GW}. Our main results -- repair schemes for RS codes in the presence of two erasures -- are presented in Section~\ref{sec:two_erasures}.
For a thorough literature review on the related works on \emph{cooperative regenerating codes} and the motivation for 
repairing multiple erasures, the interested reader is referred to our
companion paper~\cite{DauDuursmaKiahMilenkovic2016}.

\section{Repairing One Erasure in Reed-Solomon Codes}
\label{sec:GW}

We start by introducing relevant definitions and the notation used in all subsequent derivations, and then proceed to 
review the approach proposed by Guruswami and Wootters~\cite{GuruswamiWootters2016} for repairing a single erasure/node failure in RS codes. 

\subsection{Definitions and Notations}
\label{subsec:def_not}

Let $[n]$ denote the set $\{1,2,\ldots,n\}$. Let $B = \text{GF}(p^m)$ be the finite field of $p^m$ elements, for some prime $p$
and $m \geq 1$. Let $F = \text{GF}(p^{mt})$ be a field extension of $B$, where $t \geq 1$. 
We often refer to the elements of $F$ as \emph{symbols} and the elements of $B$ as \emph{sub-symbols}. 
We can also treat $F$ as a vector space of dimension $t$ over $B$, i.e. $F \cong B^t$, and hence each symbol in $F$ may be represented 
as a vector of length $t$ over $B$. A linear $[n,k]$ code $\C$ over $F$ is a subspace of $F^n$ of dimension $k$. Each element of a code is referred to as a codeword. 
The dual of a code $\C$, denoted $\Cd$, is the orthogonal complement of $\C$. 
 
\begin{definition} 
\label{def:RS}
Let $F[x]$ denote the ring of polynomials over $F$. The Reed-Solomon code $\rsk \subseteq F^n$ of dimension $k$ over a finite
field $F$ with evaluation points $A=\{\alpha_1,\alpha_2,\ldots, \alpha_n\}\subseteq F$
is defined as: 
\[
\rsk = \Big\{\big(f(\alpha_1),\ldots,f(\alpha_n)\big) \colon f \in F[x], \deg(f) < k \Big\}.
\]
\end{definition} 
A \emph{generalized} Reed-Solomon code, $\grskl$, where $\boldsymbol{\lambda} = (\lambda_1,\ldots,\lambda_n)\in F^n$, is defined similarly to a Reed-Solomon code, except that the codeword
corresponding to a polynomial $f$ is now defined as $\big( \lambda_1f(\alpha_1),\ldots,\lambda_n f(\alpha_n) \big)$, $\lambda_i \neq 0$ for all $i \in [n]$. 
It is well known that the dual of an RS code $\rsk$, for any $n \leq |F|$, is a generalized RS code $\grsnkl$, for some multiplier vector $\boldsymbol{\lambda}$~(see~\cite[Chp.~10]{MW_S}). Whenever clear from the context, we use $f(x)$ to denote a polynomial of degree at most $k-1$, which corresponds to a codeword of the RS code $\C=\rsk$, and $p(x)$ to denote a polynomial of degree at most $n-k-1$, which corresponds to a dual codeword in $\Cd$. Since
$
\sum_{\alpha \in A}\pa(\lambda_{\alpha}\fa) = 0, 
$
we refer to such a polynomial $p(x)$ as a check polynomial for $\C$. 
Note that when $n = |F|$, we have $\lambda_\alpha = 1$ for all $\alpha \in F$. 
In general, as recovering $\fa$ is equivalent to recovering $\lambda_{\alpha}\fa$, to simplify the notation, we often omit the factor $\lambda_\alpha$ in the equation above.

\subsection{The Guruswami-Wootters Repair Scheme for One Erasure}
\label{subsec:GW}

Suppose that the polynomial $f(x) \in F[x]$ corresponds to a codeword in the RS code $\C=\rsk$ and that $f(\alpha^*)$ is the erased 
symbol, where $\alpha^* \in A$ is an evaluation point of the code. 

Given that $F$ is a field extension of $B$ of degree $t$, i.e. $F = \text{GF}(p^{mt})$ and $B = \text{GF}(p^m)$, for 
some prime $p$, one may define the field trace of any symbol $\alpha \in F$ as
$\mathsf{Tr}_{F/B}(\alpha) = \sum_{i = 0}^{t-1} \alpha^{|B|^i}$,
which is always a sub-symbol in $B$. We often omit the subscript $F/B$ for succinctness. 
The key points in the repair scheme proposed by Guruswami and Wootters~\cite{GuruswamiWootters2016} can be 
summarized as follows. Firstly, each symbol in $F$ can be recovered from its $t$ independent traces. More precisely, given a 
	basis $u_1,u_2,\ldots,u_t$ of $F$ over $B$, any $\alpha \in F$ can be uniquely determined given the values of $\tr(u_i\,\alpha)$ for $i\in [t]$, i.e. $\alpha = \sum_{i=1}^t\tr(u_i \alpha)u^\perp_i$, where $\{u^\perp_i\}_{i=1}^t$ is the dual (trace-orthogonal) basis of $\{u_i\}_{i=1}^t$ (see, for instance~\cite[Ch.~2, Def.~2.30]{LidlNiederreiter1986}).
Secondly, when $n-k \geq |B|^{t-1}$, the trace function also provide checks that generate repair equations whose coefficients are linearly dependent over $B$, which keeps the repair cost low. 

Note that the checks of $\C$ are precisely those polynomials $p(x) \in F[x]$ with $\deg(p) < n - k$. 
It turns out that for $n-k \geq |B|^{t-1}$, we can define checks that take part in the repair process via the trace function described above. For each $u \in F$
and $\alpha \in F$, we define the polynomial
\begin{equation} 
\label{eq:p}
p_{u,\alpha}(x) = \tr\big(u(x-\alpha)\big)/(x-\alpha).
\end{equation} 
By the definition of a trace function, the following lemma follows in a straightforward manner. 

\begin{lemma}[\cite{GuruswamiWootters2016}]
\label{lem:p}
The polynomial $p_{u,\alpha}(x)$ defined in~\eqref{eq:p} satisfies the following properties.\\
\quad (a) $\deg(p_{u,\alpha}) = |B|^{t-1}-1$;\quad \text{(b)} $p_{u,\alpha}(\alpha) = u$.
\end{lemma}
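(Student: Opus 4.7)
The plan is to expand the numerator of $p_{u,\alpha}(x)$ using the definition of the trace and observe that it is visibly divisible by $(x-\alpha)$, after which both claims can be read off directly from the resulting closed form.

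First, I would unfold the trace as
\[
\tr\bigl(u(x-\alpha)\bigr) \;=\; \sum_{i=0}^{t-1} \bigl(u(x-\alpha)\bigr)^{|B|^i} \;=\; \sum_{i=0}^{t-1} u^{|B|^i}(x-\alpha)^{|B|^i},
\]
using that $(y_1 y_2)^{|B|^i} = y_1^{|B|^i} y_2^{|B|^i}$ over $F$. Each summand has $(x-\alpha)$ as a factor (since $|B|^i \geq 1$), so dividing by $(x-\alpha)$ yields
\[
p_{u,\alpha}(x) \;=\; \sum_{i=0}^{t-1} u^{|B|^i}(x-\alpha)^{|B|^i - 1}.
\]
This closed form is the key object, and everything else follows from inspecting it.

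For (a), the highest-degree contribution comes from the $i = t-1$ term, whose degree is $|B|^{t-1} - 1$, while all previous terms have strictly smaller degree. Since $u^{|B|^{t-1}} \neq 0$ (as $u \neq 0$; the $u = 0$ case is trivial), the leading coefficient is nonzero and $\deg(p_{u,\alpha}) = |B|^{t-1}-1$. For (b), evaluating at $x = \alpha$ kills every term with exponent $|B|^i - 1 \geq 1$, i.e., every term with $i \geq 1$, and leaves only the $i = 0$ term, which equals $u^{|B|^0}(x-\alpha)^0 \big|_{x=\alpha} = u$.

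There is essentially no obstacle here beyond making sure the $i = 0$ term is treated correctly: $(x-\alpha)^{|B|^0} = (x-\alpha)$ divides cleanly to give the constant $1$, which is precisely what produces the value $u$ in part (b). The only structural ingredient used is the Frobenius-compatibility of the trace, which is immediate from its definition.
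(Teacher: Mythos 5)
Your proof is correct and is exactly the argument the paper has in mind: the paper gives no explicit proof, merely asserting that the lemma ``follows in a straightforward manner'' from the definition of the trace, and your expansion $p_{u,\alpha}(x)=\sum_{i=0}^{t-1}u^{|B|^i}(x-\alpha)^{|B|^i-1}$ is precisely that straightforward computation, from which (a) and (b) are read off correctly. The only pedantic caveat is that (a) requires $u\neq 0$ (for $u=0$ the polynomial is zero), which is harmless since in the paper's usage $u$ is always taken from a basis.
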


By Lemma~\ref{lem:p}~(a), $\deg(p_{u,\alpha}) = |B|^{t-1}-1 < n - k$. Therefore, the polynomial $p_{u,\alpha}(x)$ corresponds to a codeword of $\Cd$ and is a check for $\C$. 
Now let $U = \{u_1,\ldots,u_t\}$ be a basis of $F$ over $B$, and set \vspace{-5pt}
\[
p_i(x) \define p_{u_i,\alpha^*}(x) = \tr\big(u_i(x-\alpha^*)\big)/(x-\alpha^*), \quad i \in [t].
\]  
These $t$ polynomials correspond to $t$ codewords of $\Cd$. 
Therefore, we obtain $t$ equations of the form \vspace{-5pt}
\begin{equation} 
\label{eq:GW_repair}
\pias\fas = - \sum_{\alpha \in A \setminus \{\alpha^*\}} \pia\fa, \quad i \in [t]. 
\end{equation} 
A key step in the Guruswami-Wootters repair scheme is to apply the trace function to
both sides of \eqref{eq:GW_repair} to obtain $t$ different \emph{repair equations} \vspace{-5pt}
\begin{equation} 
\label{eq:GW_repair_trace}
\tr\big(\pias\fas\big) = - \sum_{\alpha \in A \setminus \{\alpha^*\}} \tr\big(\pia\fa\big), \ i \in [t]. 
\end{equation}
According to Lemma~\ref{lem:p}~(b), $\pias = u_i$, for $i = 1,\ldots,t$. 
Moreover, by the linearity of the trace function, we can rewrite~\eqref{eq:GW_repair_trace} as follows. For $i = 1,\ldots,t$, \vspace{-5pt}
\begin{equation} 
\label{eq:GW_repair_trace_explicit}
\tr\big(u_i\fas\big) = - \sum_{\alpha \in A \setminus \{\alpha^*\}} \tr\big(u_i(\alpha -\alpha^*)\big) \times \tr\Big(\dfrac{f(\alpha)}{\alpha-\alpha^*}\Big). 
\end{equation}
The right-hand side sums of the equations~\eqref{eq:GW_repair_trace_explicit} can be computed by downloading the repair trace 
$\tr\Big(\frac{f(\alpha)}{\alpha-\alpha^*}\Big)$ from the node storing $\fa$, for each
$\alpha \in A \setminus \{\alpha^*\}$.
As a consequence, the $t$ independent traces $\tr\big(u_i\fas\big)$, $i = 1,\ldots,t$, of $\fas$ can be determined by downloading one sub-symbol from each of the $n-1$ available nodes. The erased symbol $\fas$ can subsequently be recovered from its $t$ independent traces.
By~\cite[Cor.~1]{DauMilenkovic2017}, this scheme is bandwidth-optimal when $n = |F|$ and $k = n(1-1/|B|)$.

\section{Repairing Two Erasures in Reed-Solomon Codes}
\label{sec:two_erasures}

We consider the same setting as in Section~\ref{subsec:GW}, i.e. $n - k \geq |B|^{t-1}$, where $B = \text{GF}(p^m)$ and $F = \text{GF}(p^{mt})$, 
and assume that $\C$ is an RS code $\rsk$ over $F$. 
However, we now suppose that two codeword symbols, say $\fas$ and $\fab$, 
are erased. 
Two repair schemes are proposed, both of which use the same bandwidth per erasure as in the case of a single erasure in~\cite{GuruswamiWootters2016}. 

\subsection{General Idea}
\label{subsec:general_idea}

We first discuss the challenges associated with repairing two erased symbols and then proceed to 
describe our strategy for dealing with this repair scenario.
A check $p(x)$ is said to \emph{involve} a codeword symbol $\fa$ if $\pa \neq 0$. 
When only one symbol $\fas$ is erased, every check $p(x)$ that involves $\fas$ can be used to generate a repair equation as follows.
\begin{equation} 
\label{eq:trace_repair}
\tr\big(\pas \fas\big) = -\sum_{\alpha \in A \setminus \{\alpha^*\}}\tr\big(\pa\fa\big). 
\end{equation}
However, when two symbols $\fas$ and $\fab$ are erased, in order to, say, recover $\fas$, we no longer have
the freedom to use every possible check that involves $\fas$. Indeed, those checks that involve both $\fas$ and $\fab$
cannot be used in a straightforward manner for repair, because we cannot simply compute the right-hand side sum of \eqref{eq:trace_repair} without retrieving some information from $\fab$.

The gist of our approach is to first generate those checks that only involve one codeword symbol, $\fas$ or $\fab$, but not both. 
We show that there exist $2(t - 1)$ such checks, which are used in the Download
Phase. Each RN uses $t-1$
checks and downloads the corresponding $n-2$ sub-symbols from each available node. 
Apart from the $t-1$ checks that involve $\fas$ but not $\fab$, and the $t-1$ checks that involve $\fab$ but not $\fas$, we also introduce
\emph{two} additional checks that involve both $\fas$ and $\fab$, which are useful
in the Collaboration Phase. 
It is not immediately clear how these last two checks can be used at all. 
However, we prove that when the extension degree $t$ is divisible by the characteristic of the field $F$, each erased symbol can be recovered at each
RN using the aforementioned $t$ checks, at the cost of downloading in
total $n-1$ sub-symbols from $n-2$ surviving nodes and from the other RN. In the first repair scheme, the two RNs exchange their repair data simultaneously (parallel repair), while in the second scheme, one node \emph{waits} to receive the data from the other node before sending out its own repair data (sequential repair). By allowing one node to wait in the Collaboration Phase, we obtain a repair scheme that works for every field extension degree. 

To identify check equations that involve one codeword symbol $f(\alpha)$ 
but not the other symbol $f(\beta)$, we first introduce a special polynomial $\Qabz$, defined as follows: 
\begin{equation}
\label{eq:Q}
\Qabz = \tr\big(z(\beta - \alpha)\big), \quad \alpha \neq \beta.
\end{equation} 
Let $\Kab$ denote the root space of $\Qabz$.
Then 
\begin{equation} 
\label{eq:Kab}
\Kab = \left\{z^* \in F \colon \tr(z^*\alpha) = \tr(z^*\beta)\right\}.
\end{equation}

\begin{lemma} 
\label{lem:Kab}
The following statements hold for every $\alpha$ and $\beta$ in $F$, $\alpha \neq \beta$.  
\begin{enumerate}
	\item[(a)] $\Kab \equiv \Kba$. In other words, the polynomial $\Qab$ and 
	the polynomial $\Qba$ have the same root spaces.
	\item[(b)] $\dim_B(\Kab) = \dim_B(\Kba) = t - 1$. 
\end{enumerate}
\end{lemma}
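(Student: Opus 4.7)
The plan is to read off both parts almost directly from the characterisation of $\Kab$ given in~\eqref{eq:Kab}, using only the $B$-linearity of the trace together with the fact that $\tr_{F/B}: F \to B$ is surjective.

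For part~(a), I would observe that the defining condition $\tr(z^*\alpha) = \tr(z^*\beta)$ in~\eqref{eq:Kab} is manifestly symmetric in $\alpha$ and $\beta$, whence $\Kab = \Kba$ as subsets of $F$. Equivalently, by $B$-linearity of the trace one has $\Qbaz = \tr(z(\alpha-\beta)) = -\tr(z(\beta-\alpha)) = -\Qabz$, so the two polynomials differ only by a nonzero scalar factor and therefore have identical zero sets.

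For part~(b), the strategy is to exhibit $\Kab$ as the kernel of a surjective $B$-linear functional on $F$ and then apply rank--nullity. Consider the map $\varphi: F \to B$ defined by $\varphi(z) = \tr(z(\beta-\alpha))$; this is $B$-linear (because multiplication by the fixed element $\beta-\alpha \in F$ is a $B$-linear endomorphism of $F$ and the trace is $B$-linear), and its kernel is precisely $\Kab$. Since $\beta - \alpha \neq 0$, the multiplication map $z \mapsto z(\beta-\alpha)$ is a $B$-linear automorphism of $F$, so $\varphi$ has the same image as $\tr_{F/B}$. The only non-tautological ingredient is then the surjectivity of $\tr_{F/B}$ onto $B$, which is classical for any finite-field extension and may simply be cited from~\cite[Ch.~2]{LidlNiederreiter1986}; combined with $\dim_B F = t$, this yields $\dim_B(\Kab) = t - 1$. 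No real obstacle arises --- the argument is essentially a bookkeeping exercise once the functional $\varphi$ is identified.
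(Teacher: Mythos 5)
Your proposal is correct and follows essentially the same route as the paper: part~(a) by the symmetry of the condition $\tr(z^*\alpha)=\tr(z^*\beta)$, and part~(b) by identifying $\Kab$ with the trace kernel twisted by multiplication by the nonzero element $\beta-\alpha$ (the paper writes $\Kab = \frac{1}{\beta-\alpha}K$ with $\dim_B K = t-1$ cited from Lidl--Niederreiter, which is just your rank--nullity argument phrased as a scaled kernel). No substantive difference.
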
 
\begin{proof} 
From \eqref{eq:Kab}, due to symmetry, $\Kab \equiv \Kba$.
As the trace function is a linear mapping from $F$ to $B$, its kernel $K = \{\kappa \in F \colon \tr(\kappa)=0\}$ is a subspace of dimension $t-1$ over $B$ (see~\cite[Thm.~2.23]{LidlNiederreiter1986}). Therefore, the root space of $\Qabz$ is $\Kab = \frac{1}{\beta-\alpha}K$, which is also a subspace of dimension $t-1$ over $B$. 
\end{proof}

We then use a root $z^*$ of the polynomial $\Qabz$ to define a check equation according to \eqref{eq:p}. 
\[
p_{z^*,\alpha}(x) = \tr\big(z^*(x - \alpha) \big)/(x - \alpha). 
\]
The following properties of $p_{z^*,\alpha}(x)$ will be used in our subsequent proofs.

\begin{lemma} 
\label{lem:Qp}
Suppose that $\alpha$ and $\beta$ are two distinct elements of $F$,
and $z^*$ is a root of $\Qabz$ or $\Qbaz$ in $F$, i.e. $z^* \in \Kab$. 
Then the following claim holds.
\begin{enumerate}
	\item[(a)] $p_{z^*,\alpha}(\beta) = 0$. 
\end{enumerate}
Moreover, if the extension degree $t$ is divisible by \text{char}$(F)$ then 
\begin{enumerate}
	\item[(b)] $p_{u,\alpha}(\beta)$ is a root of $\Qabz$ and $\Qbaz$, for every $u \in F$.  
\end{enumerate}
\end{lemma}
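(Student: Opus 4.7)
My strategy is to unpack the definitions directly: both parts are really just computations with the trace, and the only interesting ingredient is the standard fact that the trace of a subfield element $b\in B$ equals $tb$. Part (a) should fall out immediately, while part (b) is where the divisibility hypothesis is used.

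\textbf{Part (a).} I would simply evaluate. By definition, $p_{z^*,\alpha}(\beta) = \tr\!\big(z^*(\beta-\alpha)\big)/(\beta-\alpha)$. The numerator is precisely $\Qab(z^*)$ (see \eqref{eq:Q}), and by assumption $z^*\in\Kab$, so $\Qab(z^*)=0$. Since $\beta\neq\alpha$, the denominator is nonzero, giving $p_{z^*,\alpha}(\beta)=0$.

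\textbf{Part (b).} The goal is to show $\Qab\!\big(p_{u,\alpha}(\beta)\big)=0$, i.e.\
\[
\tr\!\big(p_{u,\alpha}(\beta)\,(\beta-\alpha)\big)=0.
\]
The key observation is that $p_{u,\alpha}(\beta)\,(\beta-\alpha)=\tr\!\big(u(\beta-\alpha)\big)$ by the definition of $p_{u,\alpha}$, and this quantity already lies in the subfield $B$. For any $b\in B$, the trace $\tr_{F/B}(b)$ reduces to $tb$, because $b$ is fixed by the Frobenius over $B$ so the $t$ summands in the trace are identical. Therefore
\[
\tr\!\big(p_{u,\alpha}(\beta)\,(\beta-\alpha)\big) \;=\; \tr_{F/B}\!\Big(\tr\!\big(u(\beta-\alpha)\big)\Big) \;=\; t\cdot\tr\!\big(u(\beta-\alpha)\big).
\]
Under the hypothesis $\mathrm{char}(F)\mid t$, the scalar $t$ vanishes in $F$, so the expression is zero. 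Hence $p_{u,\alpha}(\beta)\in\Kab$, i.e.\ it is a root of $\Qabz$; by Lemma~\ref{lem:Kab}(a), $\Kab=\Kba$, so it is also a root of $\Qbaz$.

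\textbf{Main obstacle.} There is no real obstacle once the right identity $p_{u,\alpha}(\beta)\,(\beta-\alpha)=\tr(u(\beta-\alpha))\in B$ is spotted; the only subtle point is recognising that the outer trace of a subfield element collapses to multiplication by $t$, which is precisely where the hypothesis $\mathrm{char}(F)\mid t$ is used. If one did not isolate this step, the computation could look opaque; with it in hand, the proof is essentially a two-line verification.
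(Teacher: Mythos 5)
Your proof is correct and follows essentially the same route as the paper: part (a) by direct evaluation of the definition, and part (b) by observing that $p_{u,\alpha}(\beta)(\beta-\alpha)=\tr\big(u(\beta-\alpha)\big)\in B$, so $\Qab\big(p_{u,\alpha}(\beta)\big)=\tr(b)=tb=0$ under the hypothesis $\mathrm{char}(F)\mid t$, with Lemma~\ref{lem:Kab}(a) handling the $\Qbaz$ claim. Nothing is missing.
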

\begin{proof} 
Note that according to Lemma~\ref{lem:Kab}~(a), the root spaces of $\Qabz$ and $\Qbaz$
are the same. The first claim is clear based on the definitions of $\Qabz$ and $p_{z^*,\alpha}(x)$.
For the second claim, it is sufficient to show that $p_{u,\alpha}(\beta)$ is a root of $\Qabz$.
 
For simplicity, let $\Delta \define \beta-\alpha$ and $b \define \tr\big(u(\beta 
- \alpha)\big) \in B$. By definition of $p_{u,\alpha}(x)$, we have
$p_{u,\alpha}(\beta) = \tr\big(u(\beta - \alpha)\big) / (\beta - \alpha)
=b / \Delta$.
By definition of $\Qabz$, we also have 
$\Qabz = \tr\big(z(\beta - \alpha) \big)
= \tr(z\Delta)$. 
Therefore,
$\Qab\big(p_{u,\alpha}(\beta)\big) = \tr\Big((b/\Delta)\Delta\Big) 
= \tr(b) = 0$,
because for $b \in B$, we always have
$\tr(b) = tb = 0$,
whenever $t$ is divisible by the char$(F)$. 
Hence, $p_{u,\alpha}(\beta)$ is a root of $\Qabz$. 
\end{proof} 

The following lemma restates what is shown in Section~\ref{subsec:GW}.  

\begin{lemma} 
\label{lem:trace}
For $\alpha \neq \alpha^*$ and $u \in F$, \vspace{-5pt}
\begin{equation} 
\label{eq:trace}
\tr\big( p_{u,\alpha^*}(\alpha)\fa\big) 
= \tr\big( u(\alpha-\alpha^*) \big)\tr\Big(\dfrac{\fa}{\alpha - \alpha^*} \Big). \vspace{-5pt}
\end{equation}
Hence, $\tr\big( p_{u,\alpha^*}(\alpha)\fa\big)$ can be computed by downloading the repair trace $\tr\Big(\frac{\fa}{\alpha - \alpha^*} \Big)$ from the node storing $\fa$. 
\end{lemma}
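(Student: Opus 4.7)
The plan is to prove the identity by direct computation, leveraging only the definition of $p_{u,\alpha^*}(x)$ from~\eqref{eq:p} and the $B$-linearity of the trace map $\tr\colon F \to B$. This lemma is essentially a restatement of the identity already used implicitly in~\eqref{eq:GW_repair_trace_explicit}, so no genuine obstacle is anticipated; the work is just making the three-line manipulation explicit.

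First, I would evaluate the check polynomial at $\alpha$: by~\eqref{eq:p},
\[
p_{u,\alpha^*}(\alpha) \;=\; \frac{\tr\big(u(\alpha - \alpha^*)\big)}{\alpha - \alpha^*},
\]
which is well-defined because $\alpha \neq \alpha^*$. The key observation is that the numerator $\tr\big(u(\alpha - \alpha^*)\big)$ is a sub-symbol in $B$, since the field trace always lands in $B$.

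Next, I would multiply by $f(\alpha)$ and regroup so that the $B$-valued scalar is separated from the $F$-valued factor:
\[
p_{u,\alpha^*}(\alpha)\,f(\alpha) \;=\; \tr\big(u(\alpha - \alpha^*)\big) \cdot \frac{f(\alpha)}{\alpha - \alpha^*}.
\]
Applying the trace to both sides and using the fact that $\tr$ is $B$-linear, the scalar factor $\tr\big(u(\alpha-\alpha^*)\big) \in B$ can be pulled outside the trace, yielding exactly~\eqref{eq:trace}.

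Finally, for the concluding sentence, I would simply observe that the coefficient $\tr\big(u(\alpha-\alpha^*)\big)$ depends only on publicly known data (the evaluation points $\alpha^*$, $\alpha$ and the chosen element $u$) and can be precomputed by the replacement node. Thus the only quantity that must be transmitted from the node storing $f(\alpha)$ is the single sub-symbol $\tr\!\big(f(\alpha)/(\alpha-\alpha^*)\big) \in B$, which is the promised repair trace.
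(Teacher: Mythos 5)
Your proposal is correct and matches the paper's own (implicit) argument: the paper justifies this lemma by the same observation used to pass from~\eqref{eq:GW_repair_trace} to~\eqref{eq:GW_repair_trace_explicit}, namely that $\tr\big(u(\alpha-\alpha^*)\big)$ lies in $B$ and can therefore be pulled out of the trace by $B$-linearity. Nothing is missing.
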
 

\subsection{A Depth-One Repair Scheme for Two Erasures}
\label{subsec:depth_one}

The scheme comprises of two phases, the Download Phase, where each RN contacts and downloads data from the other $n-2$ available nodes, and the 
Collaboration Phase, where the two RNs exchange the data, based on what
they receive earlier in the Download Phase. The main task is to design
the data to be exchanged during the two phases. This task can be completed via 
a selection of proper check polynomials to be used by each RN. 
We discuss the generation of these polynomials below. 

Let $\Ksb$ be the root space of the polynomial $\qsbz$. By Lemma~\ref{lem:Kab}~(b), $\dim_B(\Ksb) = t-1$. Let $U = \{u_1,u_2,\ldots,u_{t-1}\} \subseteq F$ and $V = \{v_1,v_2,\ldots,v_{t-1}\} \subseteq F$ be two arbitrary bases of $\Ksb$ over $B$.
We extend $U$ and $V$ to obtain the two bases $U' = \{u_1,\ldots,u_t\}$ and $V' = \{v_1,\ldots,v_t\}$ of $F$ over $B$, respectively.
For $i \in [t]$, we set \vspace{-5pt}
\begin{equation} 
\label{eq:pix}
p_i(x) \define p_{u_i,\alpha^*}(x) = \tr\big(u_i(x - \alpha^*) \big) / (x - \alpha^*),\vspace{-5pt}
\end{equation} 
\begin{equation} 
\label{eq:qix}
q_i(x) \define p_{v_i,\overline{\alpha}}(x) = \tr\big(v_i(x - \overline{\alpha}) \big)(x - \overline{\alpha}).  \vspace{-5pt}
\end{equation} 

\textbf{Download Phase.}
In this phase, each RN contacts $n-2$ available nodes to download repair
data. To determine what to download, the RN for $\fas$ uses the
first $t-1$ checks $p_1,\ldots,p_{t-1}$ to construct the following $t-1$ repair equations. 
\begin{equation}
\label{eq:pi} 
\tr\big(\pias\fas\big) = -\mkern-18mu\sum_{\alpha \in A \setminus \{\alpha^*\}} \tr\big(\pia\fa\big),\ i \in [t-1].
\end{equation} 
Similarly, the RN for $\fab$ creates the following repair equations. \vspace{-5pt} 
\begin{equation} 
\label{eq:qi}
\tr\big(\qiab\fab\big) = -\mkern-18mu\sum_{\alpha \in A \setminus \{\overline{\alpha}\}} \tr\big(\qia\fa\big),\ i \in [t-1].
\end{equation} 
By Lemma~\ref{lem:Qp}~(a), we have $\piab = 0$ and $\qias = 0$ for
all $i = 1,\ldots,t-1$. Therefore, the right-hand sides of 
\eqref{eq:pi} and \eqref{eq:qi} do not involve $\fas$ and $\fab$. 
As a result, each RN can recover $t-1$ independent traces of the corresponding
erased symbol by downloading $n-2$ sub-symbols (traces) from the available nodes. Corollary~\ref{cr:downloading_phase}, which follows directly from Lemma~\ref{lem:trace}, formally states this
fact. 

\begin{corollary} 
\label{cr:downloading_phase}
In the Download Phase, the replacement node for $\fas$ can recover $t-1$ independent traces, namely
$\tr\big(\poas\fas \big),\ldots,\tr\big(\ptmoas\fas \big)$, by downloading $n-2$
repair traces, i.e. $\tr\Big(\frac{\fa}{\alpha - \alpha^*} \Big)$ from the available
node storing $\fa$, for all $\alpha \in A \setminus \{\alpha^*, \overline{\alpha}\}$. A similar statement holds for the replacement node for $\fab$, where the checks
are $q_i$ and the repair traces are $\tr\Big(\frac{\fa}{\alpha - \overline{\alpha}} \Big)$. 
\end{corollary}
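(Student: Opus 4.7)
The plan is to verify the corollary directly from the repair equations \eqref{eq:pi} together with Lemma~\ref{lem:Qp}~(a) and Lemma~\ref{lem:trace}. The only work is to (i) show that the contribution from the second failed node drops out of the right-hand side of \eqref{eq:pi}, so nothing needs to be downloaded from the other replacement node during this phase, (ii) rewrite each surviving summand as a product of a locally-computable $B$-coefficient and a single downloaded sub-symbol, and (iii) justify why the $t-1$ resulting equations give $t-1$ \emph{independent} traces of $\fas$.

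First I would focus on the replacement node for $\fas$; the argument for $\fab$ is symmetric. Since $u_i \in \Ksb$ for $i \in [t-1]$, Lemma~\ref{lem:Qp}~(a) gives $p_i(\overline{\alpha}) = 0$. Hence the term $\piab \fab$ on the right-hand side of \eqref{eq:pi} vanishes, and the equation becomes
\[
\tr\big(\pias\fas\big) = -\!\!\sum_{\alpha \in A \setminus \{\alpha^*,\overline{\alpha}\}} \!\!\tr\big(\pia\fa\big), \quad i \in [t-1].
\]
Now I would apply Lemma~\ref{lem:trace} termwise. For each $\alpha \in A \setminus \{\alpha^*,\overline{\alpha}\}$ and each $i$, we have
\[
\tr\big(\pia\fa\big) = \tr\big(u_i(\alpha-\alpha^*)\big)\cdot \tr\Big(\tfrac{\fa}{\alpha-\alpha^*}\Big).
\]
The first factor lies in $B$ and depends only on $u_i$, $\alpha$, and $\alpha^*$, so it is known to the replacement node without any download. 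The second factor is the repair trace that the node storing $\fa$ transmits. Crucially, this trace depends on $\alpha$ only, not on $i$, so a single downloaded sub-symbol per available node supplies the data needed for \emph{all} $t-1$ equations simultaneously, giving the stated download cost of $n-2$ sub-symbols.

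Finally, I would verify independence of the recovered traces. By Lemma~\ref{lem:p}~(b), $\pias = u_i$, so the left-hand sides are $\tr(u_1 \fas),\ldots,\tr(u_{t-1}\fas)$. Since $u_1,\ldots,u_{t-1}$ form a basis of $\Ksb$ over $B$, they are $B$-linearly independent in $F$, which is equivalent to the $B$-linear independence of the functionals $x \mapsto \tr(u_i x)$ on $F$; so the $t-1$ traces are indeed independent. The symmetric argument, swapping the roles of $\alpha^*$ and $\overline{\alpha}$ and replacing $U$, $p_i$ by $V$, $q_i$, yields the claim for the replacement node of $\fab$. No real obstacle arises; the only subtle point is noticing that both the vanishing of $\piab$ and the shared structure of the repair trace across the $t-1$ equations are what make the scheme cost $n-2$ sub-symbols rather than $(t-1)(n-2)$.
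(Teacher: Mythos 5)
Your proposal is correct and follows essentially the same route as the paper: vanishing of $\piab$ via Lemma~\ref{lem:Qp}~(a), termwise application of Lemma~\ref{lem:trace} so that one downloaded repair trace per available node serves all $t-1$ equations, and independence of the recovered traces from $\pias = u_i$ with $U$ a basis of $\Ksb$ (Lemma~\ref{lem:p}~(b)). Your write-up merely makes explicit the independence step that the paper defers to the proof of Theorem~\ref{thm:depth_one}.
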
 

\textbf{Collaboration Phase.}
As one more independent trace of each erased symbol is needed for a complete recovery, the two RNs create two additional repair equations for $\fas$, $\fab$, respectively. \vspace{-10pt}
\begin{multline}
\label{eq:pt}
\tr\big(\ptas\fas\big) + \tr\big(\ptab\fab\big)\\ 
= -\sum_{\alpha \in A \setminus \{\alpha^*,\overline{\alpha}\}} \tr\big(\pta\fa\big).
\end{multline}
\vspace{-15pt}
\begin{multline}
\label{eq:qt}
\tr\big(\qtab\fab\big) + \tr\big(\qtas\fas\big)\\ 
= - \sum_{\alpha \in A \setminus \{\alpha^*,\overline{\alpha}\}} \tr\big(\qta\fa\big).
 \vspace{-10pt}
\end{multline}
It is clear that from the repair traces $\tr\Big(\frac{\fa}{\alpha - \alpha^*} \Big)$, $\alpha \in A \setminus \{\alpha^*, \overline{\alpha}\}$, retrieved in the
Download Phase, the RHS of~\eqref{eq:pt} can be determined. 
However, to determine the desired trace $\tr\big(\ptas\fas\big)$, the RN for $\fas$ needs to know the missing trace $\tr\big(\ptab\fab\big)$, which would have been downloaded from the node storing $\fab$ if it had not failed.
The following lemma states that for certain field extension degrees, this missing piece of information can be created by the RN for $\fab$ based on what it obtains in the Download Phase. It can then send this trace to the RN for $\fas$ to help complete the recovery of that symbol. A similar scenario also holds for $\fab$.    

\begin{lemma} 
\label{lem:dependence}
If the field expansion degree $t$ is divisible by the characteristic of the fields $F$ and $B$, 
then $\ptab$ is dependent (over $B$) on the set $\{\qiab \colon i \in [t-1]\}$. Also, in this case, 
$\qtas$ is dependent (over $B$) on the set $\{\pias \colon i \in [t-1]\}$. 
\end{lemma}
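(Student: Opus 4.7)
The plan is to reduce both dependence claims to the fact that $\Ksb$ is exactly $(t-1)$-dimensional over $B$ and that, under the divisibility hypothesis, $\ptab$ and $\qtas$ actually land in this space.

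First I would apply Lemma~\ref{lem:p}~(b) to identify the evaluations of the ``other family'' of check polynomials at the sought-after point. Since $q_i(x) = p_{v_i,\overline{\alpha}}(x)$, Lemma~\ref{lem:p}~(b) gives $\qiab = v_i$ for every $i \in [t-1]$. By construction $\{v_1,\ldots,v_{t-1}\}$ is a basis of $\Ksb$ over $B$, so the set $\{\qiab : i \in [t-1]\}$ spans $\Ksb$. Symmetrically, $\pias = u_i$ for $i \in [t-1]$, and $\{u_1,\ldots,u_{t-1}\}$ is a basis of $\Ksb$ (here using Lemma~\ref{lem:Kab}~(a) so that $K_{\overline{\alpha},\alpha^*} = \Ksb$).

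Next I would invoke Lemma~\ref{lem:Qp}~(b) with $(\alpha,\beta) = (\alpha^*,\overline{\alpha})$ and $u = u_t$: since $t$ is divisible by $\mathrm{char}(F)$, the value $\ptab = p_{u_t,\alpha^*}(\overline{\alpha})$ is a root of $\qsbz$, i.e.\ $\ptab \in \Ksb$. Combined with the previous step, $\ptab$ lies in a $(t-1)$-dimensional $B$-space already spanned by $\{\qiab : i \in [t-1]\}$, hence is $B$-linearly dependent on them. Applying the same argument with the roles of $\alpha^*$ and $\overline{\alpha}$ swapped, $\qtas = p_{v_t,\overline{\alpha}}(\alpha^*)$ is a root of $\qbsz$, so by Lemma~\ref{lem:Kab}~(a) it belongs to $\Ksb$, which is spanned by $\{\pias : i \in [t-1]\}$, giving the second claim.

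There is no substantive obstacle here: all of the content has been packaged into Lemmas~\ref{lem:p}, \ref{lem:Kab}, and \ref{lem:Qp}. The only subtlety worth emphasising in the write-up is the use of Lemma~\ref{lem:Kab}~(a) to identify the root spaces $\Ksb$ and $K_{\overline{\alpha},\alpha^*}$, so that the bases $U$ and $V$ (each of size $t-1$) genuinely span the same $(t-1)$-dimensional subspace into which the extra evaluations $\ptab$ and $\qtas$ are forced by the characteristic hypothesis.
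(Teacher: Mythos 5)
Your proposal is correct and follows essentially the same route as the paper: identify $\{\qiab\}_{i\in[t-1]}$ (resp.\ $\{\pias\}_{i\in[t-1]}$) with the basis $V$ (resp.\ $U$) of $\Ksb$ via Lemma~\ref{lem:p}~(b), then use Lemma~\ref{lem:Qp}~(b) to place $\ptab$ (resp.\ $\qtas$) in that same $(t-1)$-dimensional root space. The only cosmetic difference is that the paper argues one claim and appeals to symmetry, while you write out both halves and make the appeal to Lemma~\ref{lem:Kab}~(a) explicit.
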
 
\begin{proof} 
Because of symmetry, it suffices to just prove the first statement of the lemma. 
By Lemma~\ref{lem:p}~(b), we have $\qiab = v_i$, for every $i \in [t-1]$. Therefore, 
$\{\qiab \colon i \in [t-1]\} = \{v_1,\ldots,v_{t-1}\} = V$, which is a basis of the
root space $\Ksb$ of the polynomial $\qsbz$. 
Therefore, in order to show that $\ptab$ is dependent on $V$, it is sufficient to prove that $\ptab$ is a root of $\qsbz$.
But this follows immediately from Lemma~\ref{lem:Qp}~(b), because $p_t(x)$ equals $p_{u_t,\alpha^*}(x)$ by its definition in Step~4.   
\end{proof} 

From the linearity of the trace function, we arrive at the following corollary of Lemma~\ref{lem:dependence}. 

\begin{corollary}
\label{cr:dependence}
If the field expansion degree $t$ is divisible by the characteristic of the fields $F$ and $B$, 
then the trace $\tr\big(\ptab\fab\big)$ can be written as a linear combination (over $B$) of the traces in
$\Big\{\tr\big(\qiab\fab \big) \colon i \in [t-1]\Big\}$. Also, the trace $\tr\big(\qtas\fas\big)$ can be
written as a linear combination (over $B$) of the traces in $\Big\{\tr\big(\pias\fas\big) \colon i \in [t-1]\Big\}$. 
Moreover, the coefficients of these combinations do not depend on $f$. 
\end{corollary}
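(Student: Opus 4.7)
The plan is to derive the corollary directly from Lemma~\ref{lem:dependence} by exploiting the $B$-linearity of the trace function, and then to track carefully where the coefficients come from in order to verify that they do not depend on the codeword polynomial $f$.

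First, I would apply Lemma~\ref{lem:dependence} to the element $p_t(\overline{\alpha}) \in F$. Since $\{q_i(\overline{\alpha}) : i \in [t-1]\} = V$ is a basis of $\Ksb$ over $B$ and $p_t(\overline{\alpha})$ lies in $\Ksb$, there exist unique scalars $c_1, \ldots, c_{t-1} \in B$ such that
\[
\ptab = \sum_{i=1}^{t-1} c_i\, \qiab.
\]
Multiplying both sides by $\fab \in F$ and then applying $\tr = \tr_{F/B}$, which is $B$-linear (so scalars from $B$ can be pulled outside), yields
\[
\tr\big(\ptab\, \fab\big) \;=\; \sum_{i=1}^{t-1} c_i \,\tr\big(\qiab\, \fab\big),
\]
which is the desired $B$-linear combination. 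The analogous argument with the roles of $\alpha^*$ and $\overline{\alpha}$ interchanged, using the basis $U = \{p_i(\alpha^*) : i \in [t-1]\}$ of $\Kbs = \Ksb$ and the fact (from Lemma~\ref{lem:Qp}~(b) applied symmetrically) that $\qtas \in \Ksb$, yields the second linear combination.

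Finally, I would verify the independence from $f$. The scalars $c_i$ are determined by expressing the specific element $\ptab \in \Ksb$ in the fixed basis $V = \{v_1, \ldots, v_{t-1}\}$; since $\ptab = \tr\bigl(u_t(\overline{\alpha} - \alpha^*)\bigr)/(\overline{\alpha} - \alpha^*)$ depends only on $u_t$, $\alpha^*$, and $\overline{\alpha}$, and the basis $V$ is chosen once and for all, the $c_i$ depend only on the code parameters and the erasure locations, not on the codeword being repaired. The same reasoning applies to the second combination.

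I do not anticipate a serious obstacle here, since the main content is Lemma~\ref{lem:dependence}; the only minor subtlety is to confirm that the $B$-linearity of $\tr$ applies (which it does, because the $c_i$ lie in the subfield $B$ fixed by the Frobenius $x \mapsto x^{|B|}$), and to argue the $f$-independence of the coefficients rather than leaving it implicit.
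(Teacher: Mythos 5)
Your proposal is correct and matches the paper's reasoning: the paper derives the corollary in one line from Lemma~\ref{lem:dependence} together with the $B$-linearity of the trace, which is exactly the argument you spell out (expand $\ptab$ in the basis $V$ with coefficients in $B$, multiply by $\fab$, apply $\tr$, and note the coefficients depend only on $\alpha^*$, $\overline{\alpha}$, $u_t$ and the chosen bases). Your explicit verification of the $f$-independence of the coefficients is a useful elaboration of what the paper leaves implicit, but it is the same route.
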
 

Note that by Lemma~\ref{lem:trace}, the traces $\tr\big(\ptab\fab\big)$ and $\tr\big(\qtas\fas\big)$ can be determined based on the traces $\tr\Big(\frac{\fab}{\overline{\alpha} - \alpha^*} \Big)$
and $\tr\Big(\frac{\fas}{\alpha^* - \overline{\alpha}} \Big)$, respectively. Therefore, in the Collaboration Phase, the RNs can send their repair data to each other, which matches precisely what they would have sent if they had not failed.
The graphical illustration of the two phases of this scheme is depicted in Fig.~\ref{fig:distributed_two}. We refer to this as a \emph{depth-one} collaborative repair scheme because in the Collaboration Phase, two RNs exchange repair data in one round and do not have to wait for each other.  

\vspace{-5pt}
\begin{figure}[htb]
\centering
\includegraphics[scale=.8]{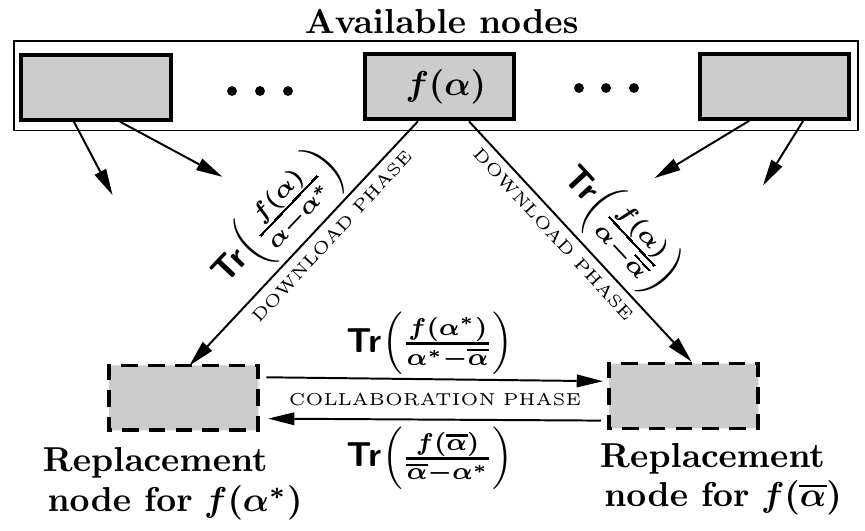}
\caption{Illustration of the depth-one collaborative repair scheme of two erasures for Reed-Solomon codes.}
\label{fig:distributed_two}
\vspace{-15pt}
\end{figure}

\begin{lemma} 
\label{lem:collaboration_phase}
In the Collaboration Phase, the replacement node for $\fas$ can recover the $t$-th trace $\tr\big(\ptas\fas \big)$, by downloading one repair trace $\tr\Big(\frac{\fab}{\overline{\alpha} - \alpha^*} \Big)$ from the replacement node for $\fab$. Similarly, the replacement node for $\fab$ can recover the $t$-th trace $\tr\Big(\frac{\fa}{\alpha - \overline{\alpha}} \Big)$ by downloading one repair trace $\tr\Big(\frac{\fas}{\alpha^* - \overline{\alpha}} \Big)$ from the replacement node for $\fas$. 
\end{lemma}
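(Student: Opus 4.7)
The plan is to use equation~\eqref{eq:pt} as the vehicle for recovering $\tr(\ptas\fas)$, and to show that every term in it is computable at the RN for $\fas$ from Download-Phase data, except for one sub-symbol, which is precisely $\tr\bigl(\fab/(\overline{\alpha}-\alpha^*)\bigr)$ and will be supplied by the RN for $\fab$. The statement for the other RN then follows by the symmetric argument applied to~\eqref{eq:qt}, with $U\leftrightarrow V$ and $\alpha^*\leftrightarrow\overline{\alpha}$.

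First, I would apply Lemma~\ref{lem:trace} with $u=u_t$ to each summand on the RHS of~\eqref{eq:pt}, obtaining $\tr(\pta\fa)=\tr\bigl(u_t(\alpha-\alpha^*)\bigr)\,\tr\bigl(\fa/(\alpha-\alpha^*)\bigr)$ for every $\alpha\in A\setminus\{\alpha^*,\overline{\alpha}\}$. The scalars $\tr(u_t(\alpha-\alpha^*))\in B$ are known to the RN, and by Corollary~\ref{cr:downloading_phase} the repair traces $\tr(\fa/(\alpha-\alpha^*))$ were already downloaded by the RN for $\fas$ during the Download Phase; hence the entire RHS of~\eqref{eq:pt} can be evaluated locally. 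The cross term on the LHS factors similarly as $\tr(\ptab\fab)=\tr\bigl(u_t(\overline{\alpha}-\alpha^*)\bigr)\,\tr\bigl(\fab/(\overline{\alpha}-\alpha^*)\bigr)$, so the only piece still missing is the single sub-symbol $\tr(\fab/(\overline{\alpha}-\alpha^*))$; once it is received, \eqref{eq:pt} can be solved for $\tr(\ptas\fas)$.

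The main obstacle is to certify that the RN for $\fab$, which at this stage holds only the $t-1$ traces $\tr(v_i\fab)$, $i\in[t-1]$, is actually able to produce $\tr\bigl(\fab/(\overline{\alpha}-\alpha^*)\bigr)$. This is precisely where the hypothesis $\mathrm{char}(F)\mid t$ enters. Setting $w:=1/(\overline{\alpha}-\alpha^*)$, I have $\tr\bigl(w(\overline{\alpha}-\alpha^*)\bigr)=\tr(1)=t\cdot 1_F=0$, so by the characterization~\eqref{eq:Kab} the element $w$ belongs to $\Ksb$. Since $V=\{v_1,\ldots,v_{t-1}\}$ is a $B$-basis of $\Ksb$, I can write $w=\sum_{i=1}^{t-1}c_iv_i$ with coefficients $c_i\in B$ depending only on the code and the erasure locations, not on $f$. $B$-linearity of the trace then gives $\tr(w\fab)=\sum_{i=1}^{t-1}c_i\,\tr(v_i\fab)$, which the RN for $\fab$ assembles from its downloaded traces and forwards to the RN for $\fas$ as the desired sub-symbol, finishing the argument; the symmetric counterpart that delivers $\tr(\qtab\fab)$ to the RN for $\fab$ is identical.
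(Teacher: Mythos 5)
Your proposal is correct and follows essentially the paper's own route: the paper also recovers $\tr\big(\ptas\fas\big)$ from~\eqref{eq:pt}, factors the terms via Lemma~\ref{lem:trace}, and uses the hypothesis $\mathrm{char}(F)\mid t$ (through Lemma~\ref{lem:dependence} and Corollary~\ref{cr:dependence}) to conclude that the missing trace is a $B$-linear combination of the traces $\tr\big(\qiab\fab\big)$ already held by the other replacement node. Your direct observation that $1/(\overline{\alpha}-\alpha^*)\in\Ksb$ is just a nonzero $B$-scalar rescaling of the paper's statement that $\ptab=\tr\big(u_t(\overline{\alpha}-\alpha^*)\big)/(\overline{\alpha}-\alpha^*)$ lies in $\Ksb$, so the two arguments coincide in substance.
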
 

\begin{theorem} 
\label{thm:depth_one}
The depth-one collaborative repair scheme can be used to repair any two erased symbols of a Reed-Solomon codes $\rsk$ at a repair bandwidth of $n-1$ sub-symbols per symbol, given that $n-k \geq |B|^{t-1}$ and
the characteristic of $F$ divides $t$.
\end{theorem}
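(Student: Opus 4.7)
The plan is to verify that after the two phases each RN holds $t$ traces that jointly determine its erased symbol, and then to tally the sub-symbols downloaded. The argument is essentially an assembly of Corollaries~\ref{cr:downloading_phase} and~\ref{cr:dependence} with the trace-based recovery principle reviewed in Section~\ref{subsec:GW}.

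First I would invoke Corollary~\ref{cr:downloading_phase}: after the Download Phase the RN for $\fas$ holds the $t-1$ traces $\tr(u_1\fas),\ldots,\tr(u_{t-1}\fas)$, having downloaded exactly one sub-symbol from each of the $n-2$ surviving nodes, and symmetrically the RN for $\fab$ holds $\tr(v_1\fab),\ldots,\tr(v_{t-1}\fab)$. This step is where the hypothesis $n-k \geq |B|^{t-1}$ is consumed: by Lemma~\ref{lem:p}~(a) each $p_i$ and $q_i$ has degree $|B|^{t-1}-1 < n-k$, so all of them are genuine check polynomials for $\C$, which is what allows~\eqref{eq:pi} and~\eqref{eq:qi} to be written down in the first place.

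Next I would use the divisibility hypothesis $\text{char}(F) \mid t$ to supply the missing $t$-th trace at each node. By Corollary~\ref{cr:dependence}, $\tr(\ptab\fab)$ is a fixed $B$-linear combination of $\{\tr(\qiab\fab) : i \in [t-1]\}$, which the RN for $\fab$ already computed; it transmits this one sub-symbol. The RN for $\fas$ then evaluates the right-hand side of~\eqref{eq:pt} (each summand $\tr(\pta\fa)$ is computable as $\tr(u_t(\alpha-\alpha^*))\,\tr(\fa/(\alpha-\alpha^*))$ by Lemma~\ref{lem:trace}, and the repair traces $\tr(\fa/(\alpha-\alpha^*))$ were downloaded in phase one), subtracts the received $\tr(\ptab\fab)$, and obtains $\tr(\ptas\fas) = \tr(u_t\fas)$. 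The symmetric argument using~\eqref{eq:qt} delivers $\tr(v_t\fab)$ to the RN for $\fab$ from one returned sub-symbol. Since $U' = \{u_1,\ldots,u_t\}$ and $V' = \{v_1,\ldots,v_t\}$ are by construction bases of $F$ over $B$, the $B$-linear map $\gamma\mapsto(\tr(u_i\gamma))_{i=1}^{t}$ is an isomorphism $F\to B^t$ (and similarly for $V'$), so each RN uniquely recovers its erased symbol via the dual basis expansion. The bandwidth tally is then immediate: $(n-2)+1 = n-1$ sub-symbols per erased symbol.

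The only delicate point is that the cross-term $\tr(\ptab\fab)$ on the left of~\eqref{eq:pt} cannot be moved to the right-hand side and so must be supplied by the collaborator. The divisibility assumption enters exactly here: via Lemma~\ref{lem:Qp}~(b), $\ptab$ is forced to lie in $\Ksb$, and since $V$ is a basis of $\Ksb$ this is what makes $\tr(\ptab\fab)$ expressible through the $V$-traces that the other RN already holds. Without $\text{char}(F)\mid t$ there is no a priori reason $\ptab$ should lie in $\Ksb$, and the depth-one collaboration step has no obvious substitute; this is the sole nontrivial obstacle, and it has already been dispatched by Lemma~\ref{lem:dependence}, so the remainder of the argument is bookkeeping.
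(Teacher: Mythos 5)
Your proposal is correct and follows essentially the same route as the paper's proof: Corollary~\ref{cr:downloading_phase} for the Download Phase, Corollary~\ref{cr:dependence} (via Lemma~\ref{lem:dependence}) plus Lemma~\ref{lem:trace} for the Collaboration Phase, the basis property of $U'$ and $V'$ for recovery, and the $(n-2)+1$ bandwidth count. The only cosmetic difference is that you have the RN for $\fab$ transmit $\tr\big(\ptab\fab\big)$ directly rather than the repair trace $\tr\big(\fab/(\overline{\alpha}-\alpha^*)\big)$ as in Lemma~\ref{lem:collaboration_phase}; these differ by a nonzero scalar in $B$, so the argument and the bandwidth are unchanged.
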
  
\begin{proof}\noindent 
By Lemma~\ref{lem:p}~(b), $\pias = u_i$ and $\qiab = v_i$ for $i \in [t]$.
Recall that the sets $U' = \{u_1,\ldots,u_t\}$ and $V' = \{v_1,\ldots,v_t\}$ are both  linearly independent over $B$. Therefore, after the two phases, each RN obtains $t$ independent traces for each erased symbol, $\tr\big(u_i\fas\big)$, for $\fas$, and $\tr\big(v_i\fab\big)$, for $\fab$, for all $i \in [t]$.  
Thus, each erased symbol will have $t$ independent traces for its recovery.
Each RN downloads $n-2$ sub-symbols in the Download Phase and one 
sub-symbol in the Collaboration Phase, according to Corollary~\ref{cr:downloading_phase}
and Lemma~\ref{lem:collaboration_phase}, which sum up to a total repair bandwidth of $n-1$ sub-symbols. 
\end{proof} 

\begin{remark}
\label{rm:distributedII}
In our repair scheme, each RN uses a bandwidth of $n-1$ sub-symbols, which is the same as the case of one erasure in~\cite{GuruswamiWootters2016}. In a naive scheme, one RN first downloads $kt$ sub-symbols from some $k$ available nodes, recover both erased symbols, and then sends the corresponding symbol to the other RN. Its total bandwidth used is $kt+t$, which is worse than ours if $\frac{k+1}{n-1} \geq \frac{2}{t}$, i.e. when $t$ is not small or the code has high rate. 
\end{remark}

\begin{example}
\label{ex:2}
Let $q = 2$, $t = 2$, $n = 4$, and $k = 2$. 
Let $\ff_4 = \{0,1,\xi,\xi^2\}$, where $\xi^2+\xi+1 = 0$. Then $\{1,\xi\}$ is a
basis of $F=\ff_4$ over $B=\ff_2$. Moreover, each element $\ba \in \ff_4$ can be represented
by a pair of bits $(a_1,a_2)$ where $\ba = a_1 + a_2\xi$. Suppose the stored file is
$(\ba,\bb) \in \ff_4^2$. To devise a systematic RS code, 
we associate with each file $(\ba,\bb) \in \ff_4^2$ a polynomial $f(x)=f_{\ba,\bb}(x)\define \ba + (\bb-\ba)x$.
We have 
\[
\begin{split}
f(0) &= a_1 + a_2\xi = \ba,\\
f(1) &= b_1 + b_2\xi = \bb,\\
f(\xi) &= (a_1+a_2+b_2) + (a_1+b_1+b_2)\xi,\\
f(\xi^2) &= (a_2+b_1+b_2) + (a_1+a_2+b_1)\xi.
\end{split}
\]
The four codeword symbols $f(0)$, $f(1)$, $f(\xi)$, and $f(\xi^2)$ are stored at Node~1,
Node~2, Node~3, and Node~4, respectively, as depicted in 
Fig.~\ref{fig:toy_example}.

\textbf{Download Phase.}
Set 
\[
Q_{1,\xi}(z) \define \tr\big( z(\xi-1)\big) = \xi z^2 + \xi^2z.
\]
We choose two bases $U = V = \{\xi\}$ of the root space of $Q_{1,\xi}(z)$. 
Set $p_1(x) = \tr\big(\xi(x-1)\big) / (x-1) = \xi^2x + 1$, and 
$q_1(x) = \tr\big(\xi(x-\xi)\big) / (x-\xi) = \xi^2x+\xi^2$. 
RN2 (RN for Node~2) downloads two bits from the two available nodes, namely $a_2 = \tr\big(f(0) / (0-1)\big)$ from Node~1 and
$a_2+b_1+b_2 = \tr\big(f(\xi^2) / (\xi^2-1)\big)$ from Node~4. 
It then uses \eqref{eq:pi} to obtain the first trace
$b_1+b_2 = \tr(\xi f(1)) = 1\times a_2 + 1\times(a_2+b_1+b_2)$. 
Similarly, RN3 (RN for Node~3) also downloads 
$a_1 = \tr\big(f(0) / (0-\xi)\big)$ from Node~1 and
$a_1+a_2+b_1 = \tr\big(f(\xi^2) / (\xi^2-\xi)\big)$ from Node~4.
It then recovers $a_2+b_1 = \tr(\xi f(\xi)) = 1\times a_1 + 1\times (a_1+a_2+b_1)$.

\textbf{Collaboration Phase.} 
RN2 sends $b_1+b_2$ over to the RN3, which, by Lemma~\ref{lem:dependence}, is the same as $\tr\big(f(1)/(1-\xi)\big)$. Conversely, RN3 sends $a_2+b_1$ over to
RN2, which is the same as $\tr\big(f(\xi)/(\xi-1)\big)$. 
$U$ and $V$ are extended to the basis $\{\xi,\xi^2\}$ of $\ff_4$ over $\ff_2$.
Set $p_2(x) = \tr\big(\xi^2(x-1)\big) / (x-1) = \xi x + 1$, and 
$q_2(x) = \tr\big(\xi^2(x-\xi)\big) / (x-\xi) = \xi x$.
Now, RN2 has three repair traces to recover
the second trace of $f(1)$ using $p_2$, i.e. $b_1 = \tr\big(\xi^2f(1)\big) = 
1\times a_2 + 0\times(a_2+b_1+b_2) + 1\times (a_2 + b_1)$.
Based on the two traces $b_1+b_2$ and $b_1$, the erased symbol $\bb=f(1)$ can be recovered. 
Similarly, RN3 can recover the second trace of $f(\xi)$
as $a_1+a_2+b_2 = \tr\big(\xi^2f(\xi)\big) = 0\times(a_1) + 1\times(b_1+b_2) + 1\times(a_1+a_2+b_1)$, and then can recover $f(\xi)$ completely. 
\end{example}    

\subsection{A Depth-Two Repair Scheme for Two Erasures}

We modify the depth-one repair scheme developed in the previous subsection to obtain
a depth-two scheme that works for \emph{all} field extension degrees. 
We still generate the checks $p_1,\ldots,p_t$ and $q_1,\ldots,q_t$ as in the first scheme,
given by \eqref{eq:pix} and \eqref{eq:qix}, respectively. However, the RN for $\fas$,
instead of $p_i$, uses the following checks 
\begin{equation} 
\label{eq:ppi}
\psix \define \tau p_i(x),\quad \text{ for all } i=1,\ldots,t. 
\end{equation} 
where $\tau \define \qoab / \ptab = v_1 / \ptab$ is a \emph{nonzero} constant, which only depends on $\alpha^*$, $\overline{\alpha}$, $u_t$, and $v_1$, and not on $f$. 
Note that since $u_t \notin \Ksb$, $\ptab \neq 0$. Hence, $\tau$ is well defined. 
Clearly, $\deg(\psi)=\deg(p_i) < n-k$ and hence, $\psix$ serve as check polynomials of the code.
 
\begin{lemma} 
\label{lem:ps}
The check polynomials $\psix$ defined as in \eqref{eq:ppi} satisfy the following 
properties. 
\begin{itemize}
	\item[(P1)] $\psoab = \pstab = \cdots =  \pstmoab = 0$.
	\item[(P2)] $\pstab = \qoab = v_1 \neq 0$. 
	\item[(P3)] $\{\psoas,\ldots,\pstas\}$ is a basis of $F$ over $B$. 
\end{itemize}
\end{lemma}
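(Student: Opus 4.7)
The plan is to prove the three properties in order, essentially by unwinding definitions and invoking previously established results; no deep new idea is required, since all the creative work has been done in setting up the scalar $\tau$ and the bases $U'$, $V'$.

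For (P1), I would fix $i \in \{1,\ldots,t-1\}$ and recall that by construction $u_i$ lies in $U \subseteq \Ksb$, i.e. $u_i$ is a root of $\qsbz$. Since $p_i(x) = p_{u_i,\alpha^*}(x)$, Lemma~\ref{lem:Qp}~(a) applied to $z^* = u_i$ and the pair $(\alpha^*,\overline{\alpha})$ gives $\piab = 0$. Multiplying by the constant $\tau$ preserves this, so $\psiab = \tau\,\piab = 0$. This handles (P1).

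For (P2), I would just evaluate. By Lemma~\ref{lem:p}~(b) applied to $q_1(x) = p_{v_1,\overline{\alpha}}(x)$ we have $\qoab = v_1$, so the definition $\tau = v_1/\ptab = \qoab/\ptab$ makes sense (and, as the paper notes, $\ptab \neq 0$ because $u_t \notin \Ksb$, so $u_t$ is not a root of $\qsbz$, which means Lemma~\ref{lem:Qp}~(a) does not force $\ptab$ to vanish; one can also invoke Lemma~\ref{lem:Qp}~(b), which would put $\ptab$ in $\Ksb$, but combined with the construction chosen for $u_t$ outside $\Ksb$ the key point is simply that $\tau$ is well defined and nonzero). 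Then
\[
\pstab \;=\; \tau\,\ptab \;=\; \frac{v_1}{\ptab}\cdot \ptab \;=\; v_1 \;=\; \qoab,
\]
and $v_1 \neq 0$ because it is a basis element. This establishes (P2).

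For (P3), I would use Lemma~\ref{lem:p}~(b) once more: $\pias = u_i$ for every $i \in [t]$, so
\[
\{\psoas,\ldots,\pstas\} \;=\; \{\tau u_1,\ldots,\tau u_t\}.
\]
Now $\{u_1,\ldots,u_t\} = U'$ is by construction a basis of $F$ over $B$, and multiplication by the nonzero element $\tau \in F$ is a $B$-linear automorphism of $F$ (it is $B$-linear because $B \subseteq F$, and it is a bijection because $\tau$ is invertible in the field $F$). A $B$-linear automorphism sends bases to bases, so $\{\tau u_1,\ldots,\tau u_t\}$ is again a basis of $F$ over $B$, giving (P3).

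The only mildly subtle point — and the one I would emphasize — is the justification that $\tau$ is well defined (i.e.\ $\ptab \neq 0$); this relies on the explicit choice that $U'$ extends the basis $U$ of $\Ksb$ by an element $u_t \notin \Ksb$, together with the characterization of $\Ksb$ as the root space of $\qsbz$. Everything else is a direct application of Lemmas~\ref{lem:p} and~\ref{lem:Qp}.
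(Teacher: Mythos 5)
Your proof is correct and follows essentially the same route as the paper: (P1) from Lemma~\ref{lem:Qp}~(a) together with $\psiab = \tau\piab$, (P2) from Lemma~\ref{lem:p}~(b) and the definition of $\tau$, and (P3) from $\psias = \tau\pias = \tau u_i$ with $\tau\neq 0$ and $U'$ a basis. One small tightening: rather than saying Lemma~\ref{lem:Qp}~(a) ``does not force'' $\ptab$ to vanish, note that $u_t\notin\Ksb$ means $\tr\big(u_t(\overline{\alpha}-\alpha^*)\big)\neq 0$ directly, so $\ptab\neq 0$ (as the paper asserts when defining $\tau$), and the aside about Lemma~\ref{lem:Qp}~(b) is unnecessary since that part requires the divisibility hypothesis absent from the depth-two setting.
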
  
\begin{proof} 
The first property (P1) holds because $p^*_i(\overline{\alpha}) = \tau \piab$ and $\piab = 0$ for every $i=1,\ldots,t-1$ by Lemma~\ref{lem:Qp}~(a).  
Property (P2) is obvious.   
Property (P3) follows from the fact that $\psias = \tau\pias$, $\tau \neq 0$, 
and that $U' = \{u_1,\ldots,u_t\} = \{\poas,\ldots,\ptas\}$ is a basis of $F$ over $B$.  
\end{proof} 

\textbf{Download Phase.}
In this phase, the RN for $\fas$ uses the first $t-1$ checks $p^*_1,\ldots,p^*_{t-1}$ to construct the $t-1$ repair equations. For $i = 1,\ldots,t-1$, \vspace{-5pt}
\begin{multline}
\label{eq:psi} 
\tr\big(\psias\fas\big) = -\sum_{\alpha \in A \setminus \{\alpha^*\}} \tr\big(\psia\fa\big)\\
=-\sum_{\alpha \in A \setminus \{\alpha^*\}} 
\tr\big(u_i(\alpha-\alpha^*)\big)\tr\Big(\frac{\tau\fa}{\alpha-\alpha^*}\Big). \vspace{-5pt}
\end{multline}
By Lemma~\ref{lem:ps}~(a), the RHS of \eqref{eq:psi} does not involve $\fab$. Thus, the RN for $\fas$ can determine $t-1$ traces $\tr\big(\psias\fas\big)$, $i \in [t-1]$, of $\fas$
by downloading $n-2$ sub-symbols $\tr\Big(\frac{\tau\fa}{\alpha-\alpha^*}\Big)$ from
the available nodes storing $\fa$, $\alpha \in A \setminus \{\alpha^*, \overline{\alpha}\}$. 
The RN for $\fab$ follows the same procedure as in the first scheme (Section~\ref{subsec:depth_one}).

\textbf{Collaboration Phase.}
The last repair equation for $\fas$ is \vspace{-8pt} 
\begin{multline}
\label{eq:pst}
\tr\big(\pstas\fas\big) + \tr\big(\pstab\fab\big)
=-\mkern-18mu\sum_{\alpha \in A \setminus \{\alpha^*,\overline{\alpha}\}}\mkern-18mu \tr\big(\psta\fa\big)\\ 
= -\sum_{\alpha \in A \setminus \{\alpha^*,\overline{\alpha}\}} 
\tr\big(u_t(\alpha-\alpha^*)\big)\tr\Big(\frac{\tau\fa}{\alpha-\alpha^*}\Big). 
\end{multline}

Clearly, the RN for $\fas$ can compute the RHS of \eqref{eq:pst} based on what
it downloaded in the Download Phase. To obtain the last trace $\tr\big(\pstas\fas\big)$, it downloads $\tr\big(\pstab\fab\big)$ from the RN for $\fab$,
which is possible because $\tr\big(\pstab\fab\big) = \tr\big(\qoab\fab\big)$,
due to Lemma~\ref{lem:ps}~(P2), which is already available at the RN for $\fab$. 
Then, due to Lemma~\ref{lem:ps}~(P3), the RN for $\fas$ has $t$ independent
traces of $\fas$ to recover this lost symbol. As $\fas$ has been recovered, 
the RN for $\fab$ downloads the repair trace $\tr\big(\fas/(\alpha^* -\overline{\alpha} )\big)$ from the RN for $\fas$ to compute the trace $\tr\big(\qtab\fab\big)$,
and then can recover $\fab$ completely. 
Note that the RN for $\fas$ has to first receive the repair trace from the RN for $\fab$ before computing and sending out its repair trace for $\fab$. 
Theorem~\ref{thm:depth_two} summarizes the discussion. \vspace{-3pt}  

\begin{theorem} 
\label{thm:depth_two}
The depth-two collaborative repair scheme can be used to repair any two erased symbols of a Reed-Solomon codes $\rsk$ at a repair bandwidth of $n-1$ sub-symbols per symbol, for every field extension degree, given that $n-k \geq |B|^{t-1}$.  
\end{theorem}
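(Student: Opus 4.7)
My plan is to verify two things: (i) each replacement node correctly recovers its erased symbol, and (ii) the total bandwidth per erasure is $n-1$ sub-symbols. The structure mirrors the proof of Theorem~\ref{thm:depth_one}, but the key novelty is to exploit the deliberate rescaling by $\tau = v_1/\ptab$ (rather than invoking the divisibility hypothesis $\text{char}(F) \mid t$) to match one unknown trace at the $\fas$-RN to a trace that the $\fab$-RN already possesses.

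First I would handle the Download Phase. For $i \in [t-1]$, property (P1) of Lemma~\ref{lem:ps} gives $\psiab = 0$, so the right-hand side of \eqref{eq:psi} involves only the surviving evaluations $\fa$ with $\alpha \in A \setminus \{\alpha^*,\overline{\alpha}\}$. By Lemma~\ref{lem:trace} (applied to $\psix = \tau\, p_{u_i,\alpha^*}(x)$), each term on the RHS is a $B$-multiple of the single sub-symbol $\tr\big(\tau\fa/(\alpha-\alpha^*)\big)$, so the $\fas$-RN obtains the $t-1$ traces $\tr(\psias\fas)$, $i \in [t-1]$, by downloading one sub-symbol from each of the $n-2$ available nodes. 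The same accounting, applied to $q_1,\ldots,q_{t-1}$ as in Section~\ref{subsec:depth_one}, yields $t-1$ traces $\tr(\qiab\fab) = \tr(v_i\fab)$ at the $\fab$-RN from $n-2$ downloaded sub-symbols.

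Next I would treat the Collaboration Phase in two sequential rounds. In the first round, the $\fab$-RN forwards the already-computed sub-symbol $\tr(v_1\fab)$ to the $\fas$-RN. By property (P2) of Lemma~\ref{lem:ps}, this is exactly $\tr(\pstab\fab)$, which is the missing term on the left-hand side of \eqref{eq:pst}. The RHS of \eqref{eq:pst} is computable from the traces already downloaded in the Download Phase (they are $B$-linear combinations of the $\tr(\tau\fa/(\alpha-\alpha^*))$), so the $\fas$-RN recovers $\tr(\pstas\fas)$. By property (P3) of Lemma~\ref{lem:ps}, the set $\{\psias\}_{i=1}^{t}$ is a basis of $F$ over $B$, so the $t$ traces collected uniquely determine $\fas$. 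In the second round, the $\fas$-RN, having reconstructed $\fas$, computes and sends the single sub-symbol $\tr\big(\fas/(\alpha^*-\overline{\alpha})\big)$ to the $\fab$-RN; Lemma~\ref{lem:trace} applied to $q_t(x) = p_{v_t,\overline{\alpha}}(x)$ then yields $\tr(\qtab\fab) = \tr(v_t\fab)$, completing a full basis-worth of traces $\{\tr(v_i\fab)\}_{i=1}^t$ from which $\fab$ is reconstructed.

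Finally I would tally the bandwidth: each RN downloads $n-2$ sub-symbols in the Download Phase and exactly one sub-symbol in the Collaboration Phase, totaling $n-1$ per erasure. The main subtlety, which I would emphasize in the writeup, is precisely the role of $\tau$: without this rescaling, the trace $\tr(\ptab\fab)$ needed by the $\fas$-RN would generally be a new combination of the $\tr(\qiab\fab)$ that the $\fab$-RN has computed, and expressing it would require the divisibility hypothesis via Lemma~\ref{lem:Qp}(b). By rescaling so that $\pstab$ coincides with $\qoab$, the required cross-trace is literally one of the traces the $\fab$-RN already holds, which is why the scheme succeeds for every extension degree $t$.
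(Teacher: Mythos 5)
Your proposal is correct and follows essentially the same route as the paper, which presents this argument as the discussion preceding Theorem~\ref{thm:depth_two}: the Download Phase via (P1) and Lemma~\ref{lem:trace}, the two sequential collaboration rounds using (P2) to identify $\tr\big(\pstab\fab\big)$ with $\tr\big(\qoab\fab\big)$ and (P3) for invertibility, and the $n-2$ plus $1$ bandwidth tally. Your emphasis on the role of the rescaling $\tau$ in removing the divisibility hypothesis matches the paper's intent exactly.
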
 \vspace{-5pt}


\section*{Acknowledgment}

This work has been supported in part by the NSF grant CCF 1526875, the Center for Science of Information under the grant NSF 0939370, and the NSF grant CCF 1619189. The work for H. M. Kiah has been supported in part by a Singapore Ministry of Education (MOE) Tier 1 grant 2016-T1-001-156.

\bibliographystyle{IEEEtran}
\bibliography{RepairingRSCodes_MultipleErasures}

\end{document}